\newtheorem{proposition}{Proposition}
\newtheorem{remark}{Remark}
\def\BibTeX{{\rm B\kern-.05em{\sc i\kern-.025em b}\kern-.08em
    T\kern-.1667em\lower.7ex\hbox{E}\kern-.125emX}}
\begin{document}

\title{Enabling Training-Free Semantic Communication Systems with Generative Diffusion Models 

}

\author{
\thanks{This work is partly supported by the National Key R\&D Program of China under Grant No. 2024YFE0200802, partly by NSFC under grant No.62293481 and No.62201505, and partly supported by the China Scholarship Council (No. 202406320381) and the CAST Young Talent Support Program for Doctoral Students.}
\thanks{S. Tang and Y. Jia contributed equally to this work.}
\IEEEauthorblockN{
Shunpu Tang\IEEEauthorrefmark{2}, 
Yuanyuan Jia\IEEEauthorrefmark{2},  
Qianqian Yang\IEEEauthorrefmark{2}\IEEEauthorrefmark{1},
Ruichen Zhang\IEEEauthorrefmark{4},
Jihong Park\IEEEauthorrefmark{6},
Dusit Niyato\IEEEauthorrefmark{4}
    }

\IEEEauthorblockA{ 
\IEEEauthorrefmark{2}College of Information Science and Electronic Engineering, Zhejiang University, Hangzhou, China \\
\IEEEauthorrefmark{4}College of Computing and Data Science, Nanyang Technological University, Singapore \\
\IEEEauthorrefmark{6} ISTD Pillar, Singapore University of Technology and Design, Singapore  \\
Email: \{tangshunpu, labulado, qianqianyang20\}@zju.edu.cn, 
\{ruichen.zhang, dniyato\}@ntu.edu.sg, jihong\_park@sutd.edu.sg
}
}

\maketitle

\begin{abstract}
Semantic communication (SemCom) has recently emerged as a promising paradigm for next-generation wireless systems. Empowered by advanced artificial intelligence (AI) technologies, SemCom has achieved significant improvements in transmission quality and efficiency. However, existing SemCom systems either rely on training over large datasets and specific channel conditions or suffer from performance degradation under channel noise when operating in a training-free manner. To address these issues, we explore the use of generative diffusion models (GDMs) as training-free SemCom systems. Specifically, we design a semantic encoding and decoding method based on the inversion and sampling process of the denoising diffusion implicit model (DDIM), which introduces a two-stage forward diffusion process, split between the transmitter and receiver to enhance robustness against channel noise. Moreover, we optimize sampling steps to compensate for the increased noise level caused by channel noise. We also conduct a brief analysis to provide insights about this design. Simulations on the Kodak dataset validate that the proposed system outperforms the existing baseline SemCom systems across various metrics. 
    \end{abstract}
\begin{IEEEkeywords}
Semantic communication, deep joint source-channel coding, diffusion models, image transmission.
\end{IEEEkeywords}

\section{Introduction}
Semantic communication (SemCom) has emerged as a promising paradigm for the next generation of wireless communication systems and has attracted significant research interest in recent years. The key idea of SemCom is to understand the meaning of the transmitted data and transmit the most relevant information to the receiver, which can significantly reduce the amount of data transmitted over the channel and support downstream tasks at the receiver \cite{Semantic1}, such as autonomous driving, metaverse, and smart cities. 

Benefiting from recent advances in artificial intelligence (AI) technologies, various SemCom systems have been proposed. Specifically, the authors in \cite{deepjscc} proposed a deep joint source-channel coding (DeepJSCC) system for wireless image transmission, where the semantic encoder and decoder are implemented using convolutional neural networks (CNNs) to extract the semantic information behind the original pixels and reconstruct the original image.  DeepJSCC achieves superior reconstruction performance compared to conventional digital communication systems and effectively mitigates the cliff effect. Following this work, the authors in \cite{SWINJSCC} modified the architecture of the semantic encoder and decoder by introducing the powerful transformer backbone, significantly improving transmission quality.  In addition, a contrastive learning-based SemCom Framework~\cite {Shunpu_SemCom} was proposed to train the semantic encoder and decoder, which enhances semantic consistency during transmission. 
\textcolor{black}{However, due to the use of the autoencoder architectures and the discriminative AI paradigm, these approaches struggle to achieve high communication efficiency, and also require extensive training on large datasets and various channel conditions, which significantly limit their performance and flexibility in practical systems.}

\textcolor{black}{Fortunately, the recent emergence of generative artificial intelligence (GenAI) offers new opportunities to overcome this limitation. By learning to capture the underlying data distribution rather than direct input-to-label mappings in discriminative AI~\cite{AIGC}, GenAI enables the generation of high-dimensional data (e.g., images or text) from low-dimensional vectors. This allows SemCom systems to transmit minimal data while enabling the receiver to reconstruct the original content through conditional sampling from the learned distribution~\cite{10614204}.} \textcolor{black}{For example, the authors in~\cite{GAN_JSCC} proposed a generative JSCC framework in which a generative adversarial network (GAN) is integrated into the decoder to enhance reconstruction quality by leveraging the semantic priors learned by the generator. Moreover, more powerful generative diffusion models (GDMs)\cite{ho2020denoising, song2021denoising} have been introduced into DeepJSCC-based frameworks\cite{Yilmaz-Other-2024, Chen-arXiveessIV-2025}, where the degraded images reconstructed by DeepJSCC serve as conditional inputs to guide the diffusion sampling process, resulting in more realistic and perceptually faithful reconstructions. However, these approaches remain dependent on a well-trained DeepJSCC system.}

\textcolor{black}{To eliminate the dependency on training, recent studies have explored training-free generative SemCom frameworks. Specifically, the authors in~\cite{tang2024evolving} introduced a channel-aware GAN inversion method for semantic encoding and employed the same GAN generator for decoding, thus avoiding any encoder–decoder training on the communication task. Meanwhile, the works in~\cite{Cicchetti_MLSP-2024,tang2024retrieval, Li_SemCom} proposed to use a pretrained image captioner to extract textual descriptions from source images. These captions, along with auxiliary visual features such as edge maps or latent vectors, are transmitted to the receiver, where the caption serves as a conditioning prompt to guide a GDM in reconstructing the original image. However, these training-free approaches lack robustness to channel noise, as the transmitted semantic conditions, such as captions or latent features, are easily corrupted during transmission, leading to degraded reconstruction quality.
}

To overcome these limitations,  we propose a fully training-free generative SemCom framework that leverages publicly available pretrained GDMs. Specifically, we design a semantic encoding and decoding method based on the sampling and inversion processes of the denoising diffusion implicit model (DDIM) \cite{song2021denoising}, which introduces a two-stage forward diffusion strategy split between the transmitter and receiver to enhance robustness against channel noise.
In addition, we optimize the number of sampling steps at the receiver to match the total noise level introduced by the channel noise. We further provide a brief analysis of this design, how channel noise shifts the latent distribution away from the latent distribution of GDM in ideal conditions, and give insights into how the proposed system mitigates this mismatch. Simulations on the Kodak dataset are conducted to demonstrate the superiority of the proposed system across a wide range of perceptual and distortion metrics. In particular, our method achieves over 50\% performance gains in Fréchet Inception distance (FID) compared to baselines when the SNR is below 5 dB. 
\section{Preliminaries}
\subsection{System model of SemCom}
In this paper, we consider a typical SemCom system for wireless image transmission, where the transmitter and receiver are equipped with a semantic encoder and decoder, respectively. For the input RGB image $\bm{x} \in \mathbb{R}^{3 \times H \times W}$, \textcolor{black}{where $H$ and $W$ denote the image height and width, respectively}, the semantic encoder first extracts the semantic information and directly maps it into to a $k$ complex-dimensional channel input signal $\bm{z} \in \mathbb{C}^{k}$, given by
\begin{equation}
    \bm{z} = \mathcal{E}_{\theta}(\bm{x}),
\end{equation}
where $\mathcal{E}_{\theta}(\cdot)$ is the semantic encoder with parameters $\theta$. To evaluate the communication efficiency, we define the bandwidth compression ratio as $\text{BCR} = \frac{k}{N}$, where $N=3 \times H \times W$ denotes the source bandwidth. Then, the channel input $\bm{z}$ is transmitted over a noisy channel, which is modeled as
\begin{equation}
    \bm{y} = \bm{z} + \bm{n},
\end{equation}
where $\bm{n} \sim \mathcal{CN}(0, \sigma_{\text{ch}}^2\bm{I})$ is the additive white Gaussian noise (AWGN) with zero mean and variance $\sigma_{\text{ch}}^2$. At the receiver side, the semantic decoder reconstructs the image $\hat{\bm{x}}$ from the received signal $\bm{y}$, which can be expressed as
\begin{equation}
    \hat{\bm{x}} = \mathcal{D}_{\phi}(\bm{y}),
\end{equation}
where $\mathcal{D}_{\phi}(\cdot)$ represents the semantic decoder with parameter $\phi$. The performance of the SemCom system can be assessed by the difference between $\bm{x}$ and $\hat{\bm{x}}$ using various metrics, including distortion metrics such as peak signal-to-noise ratio (PSNR) and multi-scale structural similarity (MS-SSIM), human perceptual metrics like learned perceptual image patch similarity (LPIPS), and distribution metrics such as Fréchet Inception distance (FID).

\subsection{Generative Diffusion Models}
GDMs are a class of generative models that learns to generate data by gradually denoising from a pure noise distribution. Specifically, a typical diffusion model consists of a forward diffusion process with no learnable parameters and a reverse denoising process with a learnable neural network\cite{ho2020denoising}. Exemplifying the image generation task with latent diffusion, the forward diffusion process gradually adds Gaussian noise to the training data $\bm{z}$, which can be expressed as
\begin{equation}
    q(\bm{z}_t | \bm{z}_{t-1}) = \mathcal{N}( \sqrt{\alpha_t}\bm{z}_{t-1}, (1-\alpha_t)\bm{I}),
\end{equation}
where $\bm{x}_t$ is the noisy image at time step $t$, and \textcolor{black}{$\alpha_t\in (0,1)$ is a hyperparameter} that schedules the noise level. Therefore, given a training image $\bm{z}_0$, after ${T_{F}}$ steps of forward diffusion, we can directly write the final noisy latent $\bm{z}_{T_{F}}$ through the reparameterization trick as
\begin{equation}
\label{eq:ddpm_forward}
    \bm{z}_{T_{F}} \sim  \mathcal{N}( \sqrt{\bar{\alpha}_{T_{F}}}\bm{z}_0,(1-\bar{\alpha}_{T_{F}})\bm{I}),
\end{equation}
where $\overline{\alpha}_{T_{F}} = \prod_{i=1}^{T_{F}}  \alpha_i$ is the cumulative product of the noise schedule terms, and $\bm{\epsilon} \sim \mathcal{N}(0, \bm{I})$ is a Gaussian noise. \textcolor{black}{We note that $\alpha_0>\alpha_t>\cdots > \alpha_{T_F}$ is satisfied in the training process to  make sure that $\bar{\alpha}_{T_{F}}$ monotonically decreases as $T_F$ increases.}

For the reverse denoising process, the model learns to iteratively denoise the image by predicting the noise added to the image at each time step, given by
\begin{equation}
    p_\omega(\bm{z}_{t-1} | \bm{z}_t) = \mathcal{N}( \mu_\omega(\bm{z}_t, t), \Sigma_\omega(\bm{z}_t, t)),
    \label{eq:reverse_denoising}
\end{equation}
where $\Sigma_\omega(\bm{z}_t, t)$ is a predefined variance and $\mu_\omega(\bm{z}_t, t)$ is the predicted mean, given as
\begin{equation}
    \mu_\omega(\bm{z}_t, t) = \frac{1}{\sqrt{\alpha_t}}(\bm{z}_t - \frac{1-\alpha_t}{\sqrt{1-\overline{\alpha}_t}}\epsilon_\omega(\bm{z}_t, t)).
\end{equation}
The denoising term $\epsilon_\omega$ is typically modeled as a neural network with U-Net architecture to predict the noise $\epsilon_\omega(\bm{z}_t, t)$ to be subtracted from the image at each time step. This neural network is trained using the following loss function:
\begin{equation}
    \mathcal{L}(\omega) = \mathbb{E}_{\bm{z}_0, \bm{\epsilon}, t} \left[ ||\bm{\epsilon} - \epsilon_\omega(\bm{z}_t, t)||^2 \right],
\end{equation}
where $\bm{\epsilon}$ is the ground-truth noise to be subtracted from the image at time step $t$. After training, the model can generate new images by sampling from the noise distribution $\bm{z}_T \sim \mathcal{N}(0, \bm{I})$ and iteratively applying the reverse denoising process in \eqref{eq:reverse_denoising} for $T$ steps or performing jump sampling using the DDIM \cite{song2021denoising} to accelerate the generation without compromising quality, which can be expressed as \eqref{eq:ddim} in the next page.
\begin{figure*}[!ht]
    \begin{equation}
        \label{eq:ddim}
        \bm{z}_{t-1} = \sqrt{\alpha_{t-1}} \left( \frac{\bm{z}_t-\sqrt{1 - \alpha_t}\epsilon_\omega(\bm{z}_t, t)}{\sqrt{\alpha_t}} \right) + \sqrt{1 - \alpha_{t-1}} \epsilon_\omega(\bm{z}_t, t).
    \end{equation}
    \begin{equation}
    \label{eq:ddim_inversion}
\bm{z}_{t+1} = \frac{\sqrt{{\alpha}_{t+1}}}{\sqrt{\alpha}_{t}} \left( \bm{z}_{t} - \sqrt{1 - \alpha_{t}}\, \epsilon_\omega(\bm{z}_{t}, t, \bm{s})\right) + \sqrt{1 - \alpha_{t+1}}\, \epsilon_\omega(\bm{z}_{t}, t, \bm{s}).
\end{equation}
    \hrulefill 
\end{figure*}

    
\section{Proposed System}

\begin{figure}[!t]
    \centering
    \includegraphics[width=\linewidth]{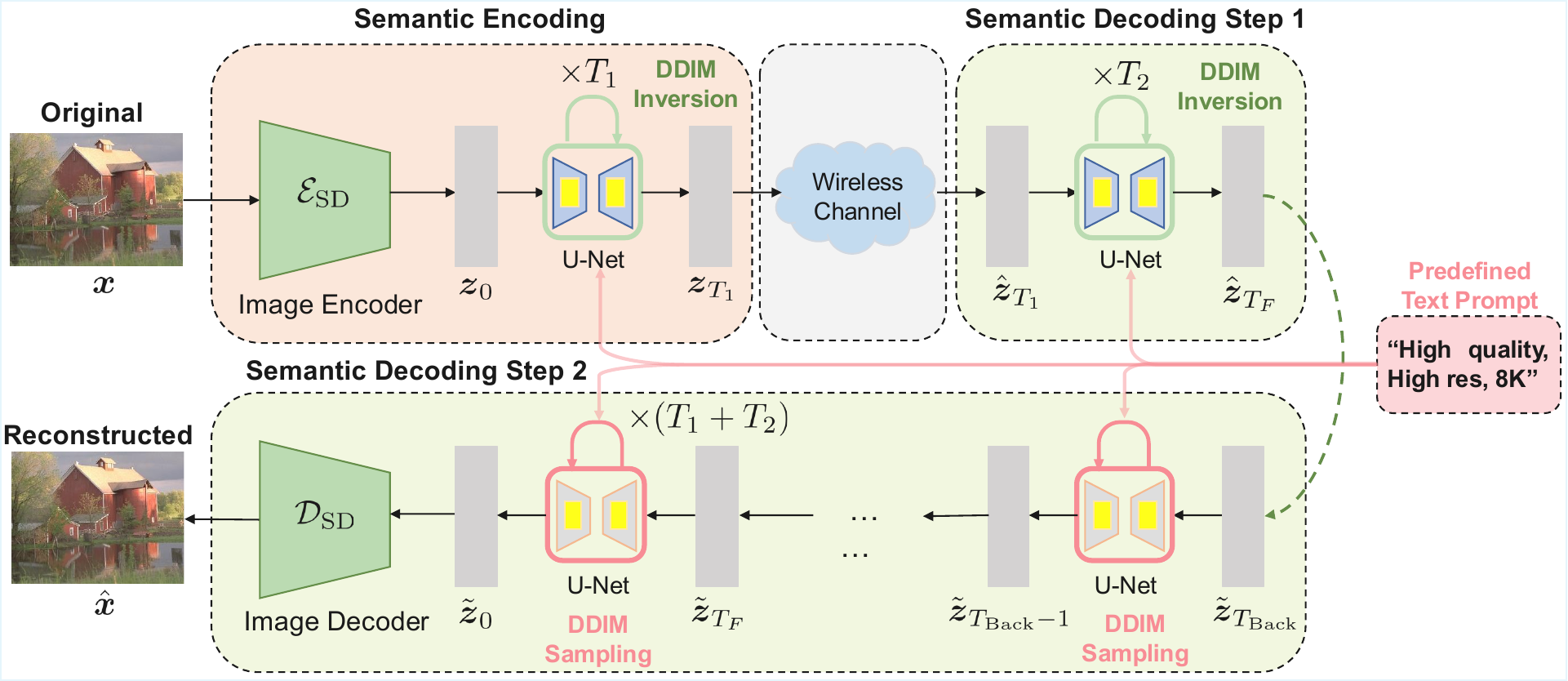}
    \caption{Overview of the proposed system, where the semantic encoder and decoder are all based on the pretrained stable diffusion model.}
    \label{fig:overview}
\end{figure}
Following the stable diffusion model, our proposed generative SemCom system comprises a CLIP text encoder $\text{CLIP}(\cdot)$, an image encoder $\mathcal{E}_{\text{SD}}(\cdot)$, a conditional U-Net model $\epsilon_\omega(\cdot,\cdot, \cdot)$, and an image decoder $\mathcal{D}_{\text{SD}(\cdot)}$, as \autoref{fig:overview} illustrates. At the transmitter side, the semantic encoder consists of a CLIP text encoder, an image encoder, and a U-Net model, which is used to extract the semantic features from the input image and generate the channel input signal. At the receiver side, there is also the same CLIP text encoder, U-Net model as the transmitter side, and an image decoder, which is used to reconstruct the image from the received signal. We note that all the components are pre-trained and fixed during the transmission process to align with the training-free design.
\begin{algorithm}[t]
    \caption{Procedure of the proposed training-free GenSemCom system}
    \label{alg:proposed}
    \begin{algorithmic}[1]
    \REQUIRE Input image $\bm{x}$, predefined text prompt $\bm{c}$, transmitter-side forward steps $T_{F,1}$, receiver-side forward steps $T_{F,2}$, denoising steps $T_{B}$ and text embedding $\bm{s}$
    \ENSURE Reconstructed image $\hat{\bm{x}}$
    \STATE \textbf{Transmitter:}
    \STATE Extract latent feature: $\bm{z}_0 = \mathcal{E}_{\mathrm{SD}}(\bm{x})$
    \STATE Perform DDIM inversion for $T_{F,1}$ steps conditioning on $\bm{s}$ to obtain $\bm{z}_{T_{F,1}}$ using Eq.~\eqref{eq:ddim_inversion}
    \STATE Transmit $\bm{z}=\gamma \bm{z}_{T_{F,1}}$ over the noisy channel
    

    \STATE \textbf{Receiver:}
    \STATE Set $\hat{\bm{z}}_{T_{F,1}} = \bm{y}$
    \STATE Perform DDIM forward process for $T_{F,2}$ steps to obtain $\hat{\bm{z}}_{T_{F}}$ where $T_{F} =T_{F,1} +T_{F,2}$ using Eq.~\eqref{eq:ddim_inversion}
    \STATE Initialize $\tilde{\bm{z}}_{T_{B}} = \hat{\bm{z}}_{T_{F}}$
    \STATE Perform DDIM sampling for $T_{B}$ steps  conditioning on $\bm{s}$ to obtain $\tilde{\bm{z}}_0$ using Eq.~\eqref{eq:ddim}
    \STATE Decode the reconstructed image: $\hat{\bm{x}} = \mathcal{D}_{\mathrm{SD}}(\tilde{\bm{z}}_0)$
    \end{algorithmic}
\end{algorithm}
\begin{figure*}[!ht]
    \centering
    \includegraphics[width=0.9\linewidth]{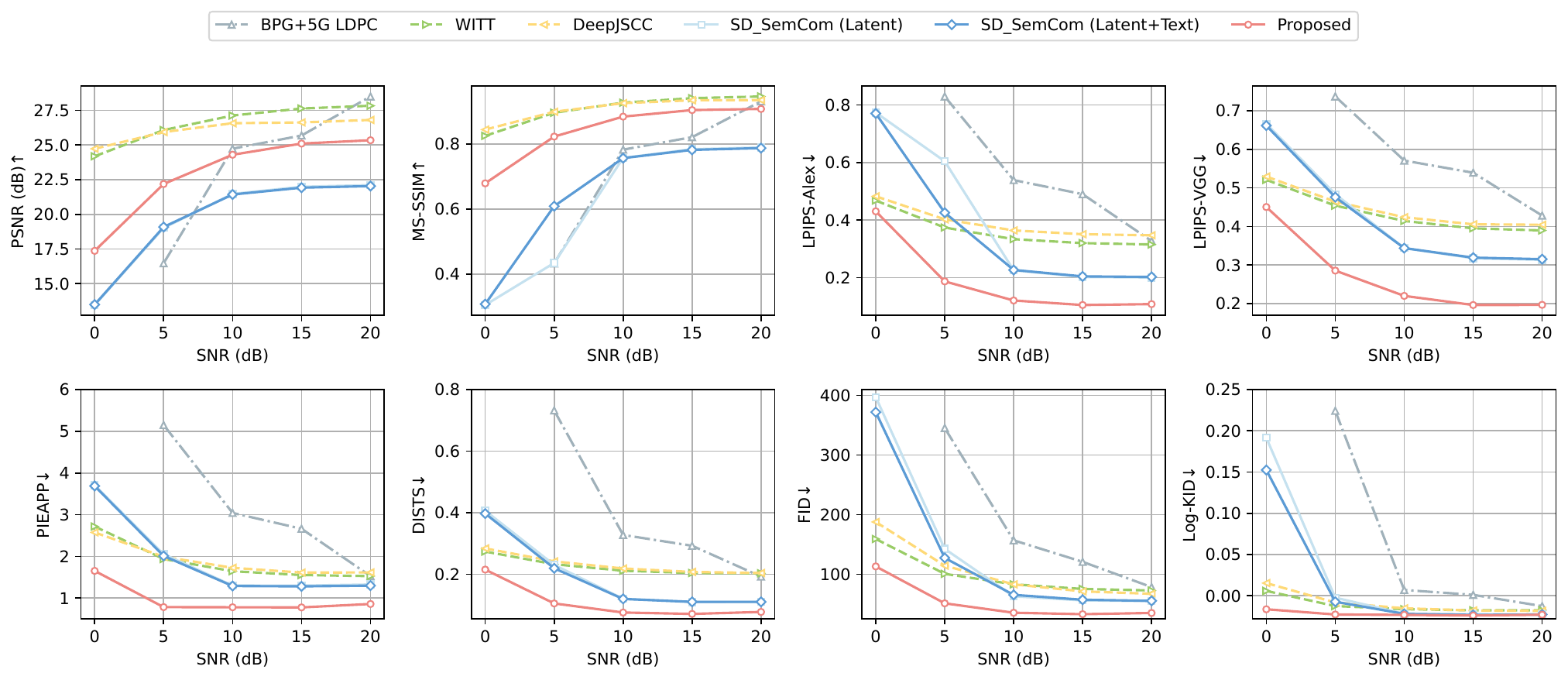}
    \caption{Reconstructed performance comparison for different methods, where BCR is set to 1/96 and SNR varies from 0 to 20dB}
    \label{fig:main_performance}
\end{figure*}
\subsection{Semantic Encoding via DDIM Inversion}
To extract the semantic information from the input image, we first use the image encoder to extract the latent feature, i.e., \ $\bm{z}_0 = \mathcal{E}_{\mathrm{SD}}(\bm{x})$. Unlike the work in \cite{Cicchetti_MLSP-2024} that directly transmits the latent feature $\bm{z}_0$ as well as the extracted caption, and then adds random noise in the diffusion forward process at the receiver side, we make three key modifications to improve the system performance as follows. 
\begin{enumerate}
    \item We propose to use the DDIM inversion to add $T_{F,1}$ steps of deterministic noise to the latent feature $\bm{z}_0$ before transmission. This process can be derived by reversing the DDIM sampling process in \eqref{eq:ddim}, and can be written as \eqref{eq:ddim_inversion} at the top of this page. The term $\bm{c}$ is a text prompt, $\bm{s} = \text{CLIP}(\bm{c})$ is the corresponding embedding produced by the CLIP text encoder, and $\epsilon_\omega(\bm{z}_{t}, t, \bm{s})$ is the predicted noise at time step $t$ by the U-Net model conditioned on $\bm{s}$. Thanks to the same U-Net model used in the forward and reverse processes, we can remove the noise more easily than the random noise added in \cite{Cicchetti_MLSP-2024}.

    \item The inversion process is performed at the transmitter side. This is because if we perform the DDIM inversion at the receiver side,  the noisy channel would distort the latent feature before inversion. This makes the UNet difficult to accurately estimate the noise at each forward timestep, deteriorating the invertibility of DDIM inversion.

    \item Instead of extracting and transmitting the image caption, we use a predefined text prompt such as ``\textit{High quality, High res, 8K}," which our experiments found to be more effective in improving reconstruction quality. \textcolor{black}{We note that $T_{F,1}$ forward diffusion steps are performed at the transmitter, and the obtained latent $\bm{z}_{T_{F,1}}$ is normalized by a scaling factor $\gamma = 1/\sqrt{\tfrac{1}{2k}||\bm{z}_{T_{F,1}}||_2^2}$ to satisfy the unit average power constraint\footnote{The factor $1/2$ comes from the conversion from real-valued to complex-valued signals.}. The normalized latent $\bm{z}=\gamma \bm{z}_{T_{F,1}}$is then mapped into complex-valued symbols for transmission.}
\end{enumerate}

\subsection{Semantic Decoding via DDIM Inversion and Sampling}
Upon receiving the noisy signal $\bm{y}$, the receiver first converts it into a real-valued vector. Next, the core idea of the proposed system is to set $\hat{\bm{z}}_{T_{F,1}} = \bm{y}$ and continue the DDIM forward process for $T_{F,2}$ steps, so that the noise level of this forward process aligns with the level of channel noise. We refer to the resulting noisy latent feature as $\hat{\bm{z}}_{T_{F}}$, where $T_{F} =T_{F,1} +T_{F,2}$. The rationale behind this forward process splitting is to avoid channel noise prediction at the transmitter by performing part of the forward process after channel perturbation at the receiver. Given this split architecture, we will provide a design guideline for the forward and backward processes in the next section. Note that when the receiver's computational capability is severely limited, a non-split architecture may be preferable. Optimizing such split model partitioning is an interesting direction for future study.


Next, the receiver performs the DDIM sampling process in \eqref{eq:ddim} to remove the $T_{F}$ steps of noise added to the latent feature. However, we note that due to the presence of channel noise, the actual noise level in $\bm{z}_{T_{F}}$ is no longer exactly equivalent to that introduced by $T_{F}$ DDIM steps. To address this issue, we propose to define $T_{B}>T_{F}$ as the number of steps to remove the noise added to the latent feature, and set $\tilde{\bm{z}}_{T_{B}} = \bm{z}_{T_{F}}$, and then perform the DDIM sampling process for $T_{B}$ steps to derive the denoised latent feature $\tilde{\bm{z}}_0$. Finally, the image decoder is used to reconstruct the image from the latent feature, i.e.\ $\hat{\bm{x}} = \mathcal{D}_{\mathrm{SD}}(\tilde{\bm{z}}_0)$. We summarize the pipeline of the proposed system in \autoref{alg:proposed}.

\subsection{Analysis}
We also provide a brief analysis of the proposed system from the perspective of matching the actual data distribution with that under the ideal training stage. Specifically, after applying $T_{F,1}$ steps of the forward diffusion process at the transmitter, introducing channel noise, and further applying $T_{F,2}$ steps of the forward process at the receiver, we characterize the resulting latent distribution in \autoref{prop:1}.

\begin{proposition}
    \label{prop:1}
 \textcolor{black}{Consider a latent feature $\bm{z}_0$ undergoing $T_{F,1}$ steps of the forward diffusion process, followed by normalization with a scaling factor $\gamma$. After being corrupted by the AWGN with variance $\sigma_{\rm ch}^2$, the latent further undergoes $T_{F,2}$ forward diffusion steps. The resulting latent $\hat{\bm{z}}_{T_F}$ follows}
\begin{equation}
    \hat{\bm {z}}_{T_F}\;\sim\;\mathcal N\Bigl(\gamma \sqrt{\bar\alpha_{T_F}}\;\bm z_0,\;(\sigma_\epsilon^2+\sigma_n^2)\,\bm I\Bigr),
\end{equation}
where
\begin{equation}
    \begin{aligned}
        \sigma_\epsilon^2  & = 1-\frac{\bar\alpha_{T_F}}{\bar\alpha_{T_{F,1}}}\,(1-\gamma^2)
        -\gamma^2\,\bar\alpha_{T_F}, \\
\sigma_n^2
&=\frac{\bar\alpha_{T_F}}{\bar\alpha_{T_{F,1}}}\,\sigma_{\rm ch}^2=\textcolor{black}{\Big(\textstyle\prod_{i=T_{F,1}+1}^{T_{F,1}+T_{F,2}} \alpha_i\Big) \sigma_{\rm ch}^2.}
\end{aligned}
\end{equation}
\end{proposition}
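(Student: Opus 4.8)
The plan is to track the latent feature through all four operations—the transmitter-side forward diffusion, the power normalization by $\gamma$, the additive channel noise, and the receiver-side forward diffusion—as a sequence of affine maps and independent Gaussian injections. Because each stage is either deterministic-linear or additive-Gaussian, the composition stays jointly Gaussian, so the entire proof reduces to bookkeeping: propagate the mean through the linear scalings and sum the independent variance contributions. The two named quantities $\sigma_n^2$ and $\sigma_\epsilon^2$ are then read off by grouping the channel-noise term separately from the two diffusion-noise terms.

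First I would use the forward marginal \eqref{eq:ddpm_forward} together with the reparameterization trick to write the transmitter-side latent as
\begin{equation}
  \bm{z}_{T_{F,1}} = \sqrt{\bar\alpha_{T_{F,1}}}\,\bm{z}_0 + \sqrt{1-\bar\alpha_{T_{F,1}}}\,\bm{\epsilon}_1,\qquad \bm{\epsilon}_1\sim\mathcal{N}(0,\bm{I}).
\end{equation}
Applying the normalization scale $\gamma$ and adding the (real-equivalent) channel noise $\bm{n}$ with per-component variance $\sigma_{\rm ch}^2$ yields the received latent
\begin{equation}
  \hat{\bm{z}}_{T_{F,1}} = \gamma\sqrt{\bar\alpha_{T_{F,1}}}\,\bm{z}_0 + \gamma\sqrt{1-\bar\alpha_{T_{F,1}}}\,\bm{\epsilon}_1 + \bm{n}.
\end{equation}
Next I would model the $T_{F,2}$ receiver-side steps by the conditional forward marginal between timesteps $T_{F,1}$ and $T_F$, namely
\begin{equation}
  \hat{\bm{z}}_{T_F} = \sqrt{\tfrac{\bar\alpha_{T_F}}{\bar\alpha_{T_{F,1}}}}\,\hat{\bm{z}}_{T_{F,1}} + \sqrt{1-\tfrac{\bar\alpha_{T_F}}{\bar\alpha_{T_{F,1}}}}\,\bm{\epsilon}_2,\qquad \bm{\epsilon}_2\sim\mathcal{N}(0,\bm{I}),
\end{equation}
which is consistent with $\bar\alpha_{T_F}/\bar\alpha_{T_{F,1}}=\prod_{i=T_{F,1}+1}^{T_F}\alpha_i$.

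Substituting the expression for $\hat{\bm{z}}_{T_{F,1}}$ and using the independence of $\bm{\epsilon}_1$, $\bm{\epsilon}_2$, and $\bm{n}$, the two diffusion scalings multiply to $\sqrt{\bar\alpha_{T_F}}$, so the mean collapses to $\gamma\sqrt{\bar\alpha_{T_F}}\,\bm{z}_0$ as claimed. The total variance then splits into three independent terms: $\tfrac{\bar\alpha_{T_F}}{\bar\alpha_{T_{F,1}}}\gamma^2(1-\bar\alpha_{T_{F,1}})$ from $\bm{\epsilon}_1$, $\tfrac{\bar\alpha_{T_F}}{\bar\alpha_{T_{F,1}}}\sigma_{\rm ch}^2$ from $\bm{n}$, and $1-\tfrac{\bar\alpha_{T_F}}{\bar\alpha_{T_{F,1}}}$ from $\bm{\epsilon}_2$. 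I would identify the channel term as $\sigma_n^2$ and add the two diffusion terms; a one-line simplification (expanding $\gamma^2(1-\bar\alpha_{T_{F,1}})$ and cancelling) gives $\sigma_\epsilon^2 = 1-\tfrac{\bar\alpha_{T_F}}{\bar\alpha_{T_{F,1}}}(1-\gamma^2)-\gamma^2\bar\alpha_{T_F}$, matching the statement.

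The hard part will be justifying that the \emph{deterministic} DDIM inversion/forward iteration in \eqref{eq:ddim_inversion} may be replaced by the \emph{stochastic} Gaussian forward marginal of \eqref{eq:ddpm_forward} for this distributional analysis; this replacement is exact only under the idealization that the U-Net estimate $\epsilon_\omega(\bm{z}_t,t,\bm{s})$ matches the true injected noise, so that the DDIM trajectory remains on the forward marginal, and I would record this as the standing modeling assumption. A secondary subtlety is that $\gamma=1/\sqrt{\tfrac{1}{2k}\|\bm{z}_{T_{F,1}}\|_2^2}$ is itself a function of $\bm{z}_{T_{F,1}}$ and hence random; treating it as a deterministic constant is an approximation valid under the unit-average-power constraint, and I would flag it explicitly while absorbing the complex-to-real conversion factor (the footnote's $1/2$) into the effective per-component variance $\sigma_{\rm ch}^2$.
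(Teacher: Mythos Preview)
Your proposal is correct and follows essentially the same approach as the paper's proof sketch: both write $\bm{z}_{T_{F,1}}$ via the forward marginal, apply the $\gamma$-scaling and channel noise, then propagate through the receiver-side forward steps and collect Gaussian terms. The only cosmetic difference is that the paper unrolls the receiver recursion one step at a time before invoking reparameterization, whereas you jump straight to the closed-form conditional marginal $\sqrt{\bar\alpha_{T_F}/\bar\alpha_{T_{F,1}}}\,\hat{\bm z}_{T_{F,1}}+\sqrt{1-\bar\alpha_{T_F}/\bar\alpha_{T_{F,1}}}\,\bm\epsilon_2$; your explicit flagging of the DDIM-vs-DDPM idealization and the random-$\gamma$ approximation is a welcome addition that the paper's sketch leaves implicit.
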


\begin{proof}[Proof Sketch]
    According to \eqref{eq:ddpm_forward}, after $T_{F,1}$ steps of the forward diffusion process, the latent at the transmitter can be written as
    \begin{equation}
        \bm{z}_{T_{F,1}} = \sqrt{\overline{\alpha}_{T_{F,1}}}\,\bm{z}_0 + \sqrt{1-\overline{\alpha}_{T_{F,1}}}\,\bm{\epsilon}.
    \end{equation}
    After transmission through the AWGN channel and normalization, the received latent becomes
    \begin{equation}
        \bm{y} = \gamma \bm{z}_{T_{F,1}} + \bm{n}.
    \end{equation}
    At the receiver, additional forward steps are recursively applied. For instance, after one step,
    \begin{equation}
        \hat{\bm{z}}_{T_{F,1}+1} = \sqrt{\alpha_{T_{F,1}+1}}\,\bm{y} + \sqrt{1-\alpha_{T_{F,1}+1}}\,\bm{\epsilon}.
    \end{equation}
    Further expanding the recursion, $\hat{\bm{z}}_{T_{F,1}+2}$ can be expressed as 
    \begin{equation}
    \begin{aligned}
        \hat{\bm{z}}_{T_{F,1}+2} &= \sqrt{\alpha_{T_{F,1}+2}}\,\hat{\bm{z}}_{T_{F,1}+1} + \sqrt{1-\alpha_{T_{F,1}+2}}\,\bm{\epsilon} 
        \end{aligned}
        \label{eq:z_T1+2}
        \end{equation}
        By continuing the recursion until $T_F$ steps, and using the reparameterization trick to combine the accumulated noise terms, we then prove \autoref{prop:1}.
    \end{proof}

Then, compared with the latent distribution under the ideal training stage in \eqref{eq:ddpm_forward},  we can obtain the key insights according to \autoref{prop:1}, as follows:
\begin{remark}[Necessity of applying forward diffusion at the transmitter]
According to the forward diffusion process in \eqref{eq:ddpm_forward}, applying more forward steps $T_{F,1}$ at the transmitter makes the transmitted latent $\bm{z}_{T_{F,1}}$ increasingly resemble a standard Gaussian distribution. This leads the normalization factor $\gamma$ to approach unity. Moreover, the noise variance $\sigma_\epsilon^2$ is affected by the ratio $\frac{\bar\alpha_{T_F}}{\bar\alpha_{T_{F,1}}}$. \textcolor{black}{Since the noise schedule $\{\alpha_i\}$ satisfies $\alpha_i\in(0,1)$ and typically decreases with the step $i$, setting $T_F > T_{F,1}$ (i.e., $T_{F,2} > 0$) leads to the multiplication of additional $\alpha_i$ terms, thereby decreasing the ratio $\bar\alpha_{T_F}/\bar\alpha_{T_{F,1}}$ and reduces the gap between $\sigma_\epsilon^2$ and the ideal noise variance $1-\bar\alpha_{T_F}$ in \eqref{eq:ddpm_forward}.}

\end{remark}

\begin{figure*}[!t]
    \centering
    \begin{subfigure}{0.25\linewidth}
        \centering
        \small Original
                \vspace{2pt}
        \includegraphics[width=\linewidth]{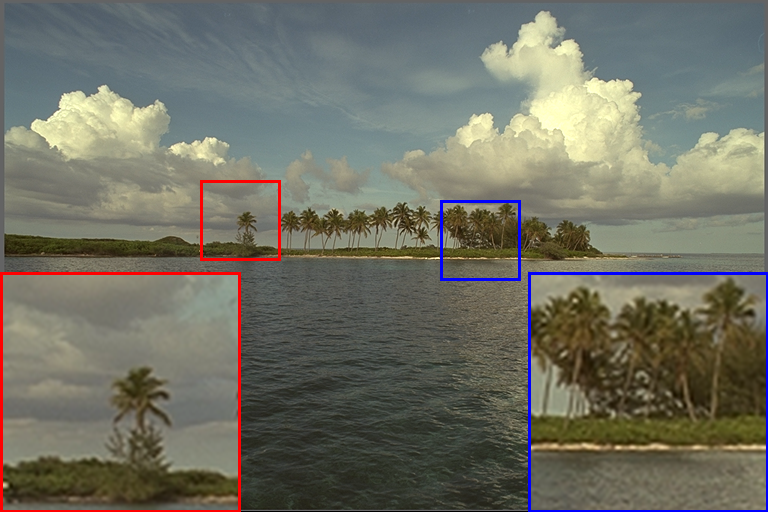}
        \small PSNR/LPIPS 
        \label{fig:original}
    \end{subfigure}
        \begin{subfigure}{0.25\linewidth}
        \centering
        \small BPG+LDPC
        \vspace{2pt}
        \includegraphics[width=\linewidth]{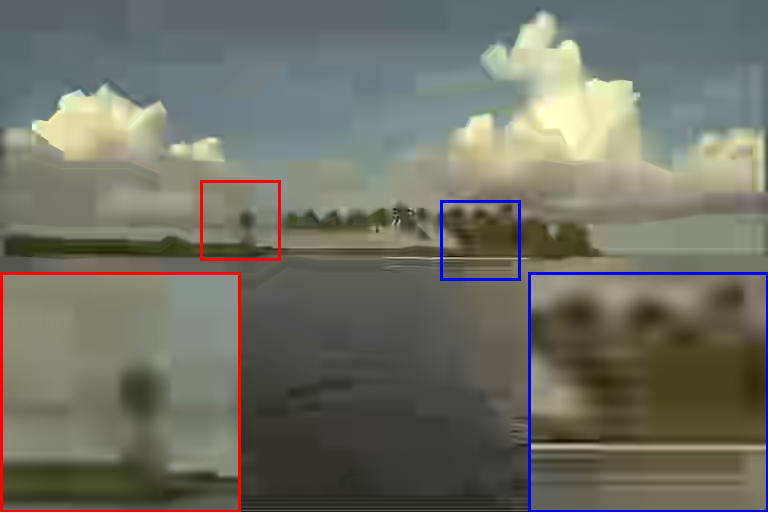}
        \small 26.50/0.4415
        \label{fig:bpgldpc}
    \end{subfigure}
    \begin{subfigure}{0.25\linewidth}
        \centering
        \small DeepJSCC
        \vspace{2pt}
        \includegraphics[width=\linewidth]{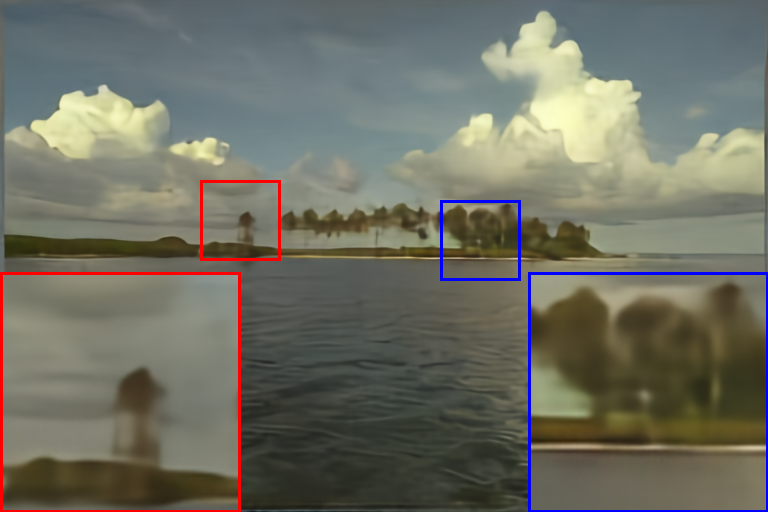}
        \small 28.16dB/0.4286
        \label{fig:deepjscc}
    \end{subfigure}
    \\        \vspace{3mm}
    \begin{subfigure}{0.25\linewidth}
        \centering
        \small WITT
        \vspace{2pt}
        \includegraphics[width=\linewidth]{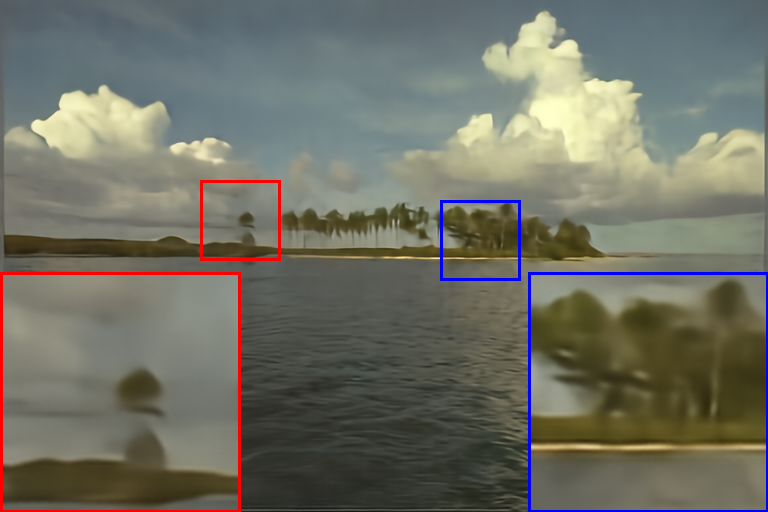}
        \small 28.23dB/0.4078
        \label{fig:witt}
    \end{subfigure}
    \begin{subfigure}{0.25\linewidth}
        \centering
        \small SD\_SemCom (Latent+Text)
                \vspace{2pt}
        \includegraphics[width=\linewidth]{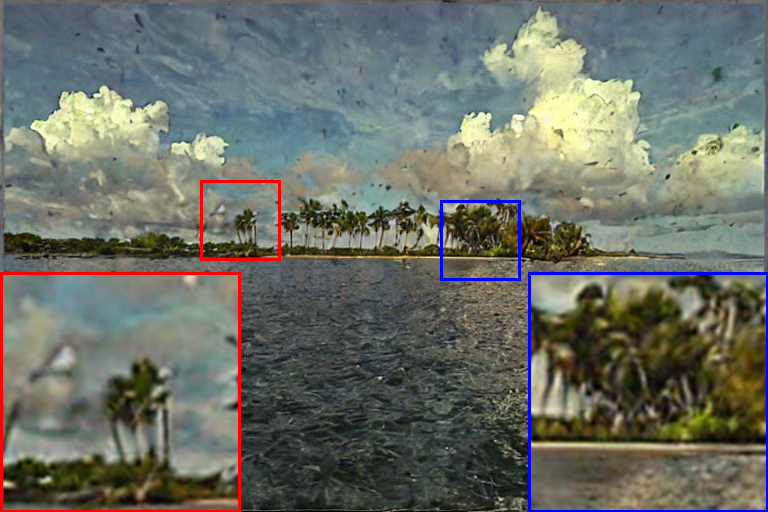}
        \small 20.67dB/0.4366
        \label{fig:sdsemcom}
    \end{subfigure}
    \begin{subfigure}{0.25\linewidth}
        \centering
        \small Proposed
                \vspace{2pt}
        \includegraphics[width=\linewidth]{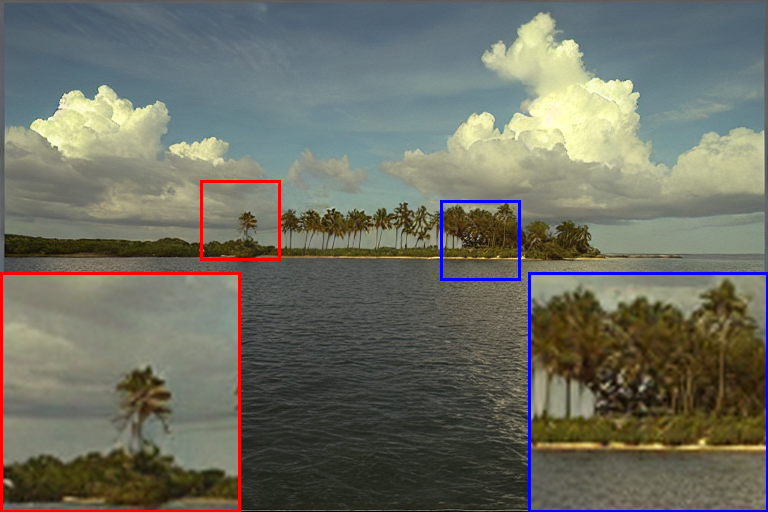}
        \small  24.67dB/0.2370
        \label{fig:proposed}
    \end{subfigure}
    \caption{Visual comparison at SNR of 5 dB. }
    \label{fig:vis_comp}
\end{figure*}

\begin{remark}[Necessity of continuing forward diffusion at the receiver]
    The contribution of the channel noise to the final latent is quantified by $\sigma_n^2 = \frac{\bar\alpha_{T_F}}{\bar\alpha_{T_{F,1}}}\,\sigma_{\rm ch}^2$. Similarly, by continuing the forward diffusion process for $T_{F,2}$ additional steps at the receiver, the ratio $\frac{\bar\alpha_{T_F}}{\bar\alpha_{T_{F,1}}}$ becomes smaller, which effectively reduces the impact of the channel noise.
\end{remark}

\begin{remark}[Necessity of performing additional denoising steps]
Due to the presence of channel noise, the total noise variance in the final latent $\hat{\bm{z}}_{T_F}$ becomes $\sigma_{\text{tot}}^2 = \sigma_\epsilon^2 + \sigma_n^2$, which is larger than the noise variance resulting purely from $T_F$ diffusion steps. In order to effectively remove the accumulated noise, the denoising process needs to be adjusted accordingly. Specifically, the equivalent number of denoising steps $T_{B}$ can be selected as
\begin{equation}
    1-\bar\alpha_{T_{B}} 	\approx \sigma_{\text{tot}}^2 \geq 1-\bar{\alpha}_{T_F},
\end{equation}
\textcolor{black}{where $1-\bar\alpha_{T_{B}}$ represents the noise level after $T_{B}$ steps forward without channel noise.} By setting $T_B > T_F$, the denoising schedule better matches the actual noise level, thereby improving the final reconstruction quality.
\end{remark}


\section{Simulations}\label{Sec:Simulation}

\subsection{Settings}
In simulations, we implement the semantic encoder and decoder using the pretrained stable diffusion 1.5 model\footnote{\url{https://huggingface.co/stable-diffusion-v1-5/stable-diffusion-v1-5}}, the BCR of the proposed system is set to $1/96$, and the SNR varies from 0 dB to 20 dB. During each transmission,
the text prompt is set to ``\textit{High quality, High res, 8K}" for all images with a guidance scale of 6. We empirically set the total steps of the noise scheduler of stable diffusion is $50$, and the number of forward steps $T_{F,1}$ and $T_{F,2}$ are set to 5 and 5, respectively. The number of denoising steps $T_{B}$ is set to decrease from $18$ to $10$ as the SNR increases from 0 dB to 20 dB. We evaluate the performance under the widely used Kodak dataset\footnote{\url{https://r0k.us/graphics/kodak/}} using various metrics, including PSNR, MS-SSIM, LPIPS, PIEAPP, DISTS, FID, and KID. We compare the proposed system with several baselines, including a traditional BPG compression followed by 5G LDPC for channel coding, an improved DeepJSCC system in \cite{GAN_JSCC}, WITT\cite{SWINJSCC}, and the training-free semantic communication system SD\_SemCom \cite{Cicchetti_MLSP-2024} with the same stable diffusion model.

\subsection{Effectiveness of the Proposed Method}
As shown in \autoref{fig:main_performance}, we compare the performance of the proposed system with the baselines, where SNR varies from 0 dB to 20 dB. From this figure, we can observe that the proposed system outperforms the baselines in most metrics. Specifically, as for the distortion metrics, the proposed system outperforms the competitive baselines with the same stable diffusion model, and also shows superior performance compared to the traditional digital communication system when the SNR is below 10 dB. In terms of the perceptual metrics and distribution metrics, the proposed system shows significant performance gains over all the baselines, which demonstrates the effectiveness of the proposed system.
\begin{table}[!t]
    \centering
    \caption{Comparison between random noise and DDIM inversion noise at SNR = 5 dB. 
    }
    \label{tab:noise_comparison}
    \begin{tabular}{lcccccc}
    \toprule
    Method  & PSNR ↑ & MS-SSIM ↑ & LPIPS ↓  & FID ↓ \\
    \midrule
    Random Noise & 21.72 & 0.767 & 0.211 & 58.75 \\
    Proposed (w/ Caption)  & \textbf{22.63} & \textbf{0.827} &\textbf{0.174} & \textbf{49.67}  \\
    Proposed (w/o Caption) & \underline{22.19} & 
    \underline{0.823} & \underline{0.187} & \underline{51.20}\\
    \bottomrule
    \end{tabular}
    \end{table}
    
In \autoref{fig:vis_comp}, we visualize the reconstructed images of the proposed system and the baselines, where the SNR is set to 5 dB. From this figure, we can find that the proposed system can reconstruct the image with better quality than the baselines. Specifically, the images reconstructed by BPG+LDPC, DeepJSCC, and WITT are extremely blurry and contain severe artifacts. Moreover, the image reconstructed by SD\_SemCom appears to be clearer with more realistic details. However, these images still suffer from severe artifacts and noise. In contrast, the image reconstructed by the proposed system shows significantly improved perceptual quality and fidelity, and is more similar to the original image, which further validates the effectiveness of the proposed system.
\addtolength{\topmargin}{0.05in}
\subsection{Ablation Studies}
We additionally conduct ablation studies to validate the effectiveness of the modifications made in the proposed system. As shown in \autoref{tab:noise_comparison}, we compare the performance of the proposed system with random noise and DDIM inversion noise, where the SNR is set to 5 dB. We also investigate the impact of not transmitting the caption. From this table, we can observe that the proposed system with DDIM inversion noise outperforms the one with random noise in all metrics, which demonstrates the effectiveness of the DDIM inversion used in the proposed system. Moreover, we find that the performance of the proposed system without transmitting the caption is only slightly lower than the version with a losslessly transmitted caption, with a gap of around 3\% in FID, which highlights the effectiveness of using a predefined prompt.

In \autoref{tab:ablation_steps}, we compare the performance of the proposed system with different settings of $T_{F,1}$, $T_{F,2}$, and $T_{B}$. From this table, we can observe that while the number of denoising steps $T_{B}$ is not enough, the performance of the proposed system is significantly degraded. This is because the noise level in the latent feature does not match the denoising process well. Moreover, while we set $T_{B}$ to be larger than $T_F=T_{F,1}+T_{F,2}$, the performance is improved. Besides, the configuration $T_{F,1}=5$ and $T_{F,2}=5$ outperforms the case with $T_{F,1}=10$ and $T_{F,2}=0$. These results align well with our analysis.

\begin{table}[!t]
    \centering
    \caption{Ablation study on the effect of $T_{F,1}$, $T_{F,2}$, and $T_{B}$ steps, whe SNR is set to 5 dB.}
    \label{tab:ablation_steps}
    \begin{tabular}{cccccccc}
    \toprule
    $T_{F,1}$ & $T_{F,2}$ & $T_{B}$ & PSNR ↑ & MS-SSIM ↑ & LPIPS ↓ & FID ↓ \\
    \midrule
    10 & 0 & 10 & 20.58 & 0.327 & 0.738 & 106.79 \\
    5 & 5 & 10 & 18.58 & 0.497 & 0.607 & 251.20 \\
    10 & 0 & 15 & \textbf{22.93} & \textbf{0.830} &\underline{0.192}   & \underline{54.04} \\
    5 & 5 & 15 & \underline{22.19} & \underline{0.823} & \textbf{0.187} & \textbf{51.20} \\
    0 & 10 & 15 & 21.83 & 0.793 & 0.197 & 53.07 \\
    \bottomrule
    \end{tabular}
\end{table}

\section{Conclusion}
In this paper, we have demonstrated that pretrained GDMs can serve as effective training-free JSCC for SemCom systems when appropriately adapted. By introducing a two-stage forward diffusion strategy and analyzing the impact of channel noise on the latent distribution, we have revealed how diffusion-based semantic encoding and denoising processes can align with noisy channel environments. Through extensive experiments, we have validated that the proposed system outperforms existing SemCom frameworks across diverse evaluation metrics.


\bibliographystyle{IEEEtran}
\bibliography{IEEEabrv, references}

@STRING{IEEE_J_JSAC       = "{IEEE} J. Sel. Areas Commun."}

@STRING{IEEE_J_CCN      = "{IEEE} Trans. Cogn. Commun. Netw."}

@STRING{IEEE_M_WC         = "{IEEE} Wireless Commun. Mag."}

@article{Semantic1,
  author       = {Deniz G{\"{u}}nd{\"{u}}z and
                  Zhijin Qin and
                  Inaki Estella Aguerri and
                  Harpreet S. Dhillon and
                  Zhaohui Yang and
                  Aylin Yener and
                  Kai{-}Kit Wong and
                  Chan{-}Byoung Chae},
  title        = {Beyond Transmitting Bits: Context, Semantics, and Task-Oriented Communications},
  journal      = {{IEEE} J. Sel. Areas Commun.},
  volume       = {41},
  number       = {1},
  pages        = {5--41},
  year         = {2023},
}

@article{DeepJSCC,
  author    = {Eirina Bourtsoulatze and
               David Burth Kurka and
               Deniz G{\"{u}}nd{\"{u}}z},
  title     = {Deep Joint Source-Channel Coding for Wireless Image Transmission},
  journal   = {{IEEE} Trans. Cogn. Commun. Netw.},
  volume    = {5},
  number    = {3},
  pages     = {567--579},
  year      = {2019},
}

@ARTICLE{GAN_JSCC,
  author={Erdemir, Ecenaz and Tung, Tze-Yang and Dragotti, Pier Luigi and Gündüz, Deniz},
  journal=IEEE_J_JSAC, 
  title={Generative Joint Source-Channel Coding for Semantic Image Transmission}, 
  year={2023},
  volume={41},
  number={8},
  pages={2645-2657},
  }

@ARTICLE{Shunpu_SemCom,
  author={Tang, Shunpu and Yang, Qianqian and Fan, Lisheng and Lei, Xianfu and Nallanathan, Arumugam and Karagiannidis, George K.},
  journal={IEEE Trans. Commun.}, 
  title={Contrastive Learning-Based Semantic Communications}, 
  year={2024},
  volume={72},
  number={10},
  pages={6328-6343},
}

@inproceedings{LPIPS,
  author       = {Richard Zhang and
                  Phillip Isola and
                  Alexei A. Efros and
                  Eli Shechtman and
                  Oliver Wang},
  title        = {The Unreasonable Effectiveness of Deep Features as a Perceptual Metric},
  booktitle    = {{IEEE/CVF} Int. Comput. Vis. Pattern Recognit. (CVPR)},
  pages        = {586--595},
  year         = {2018},
}

@INPROCEEDINGS{Cicchetti_MLSP-2024,
  title     = "Language-oriented semantic latent representation for image
               transmission",
  author    = "Cicchetti, Giordano and Grassucci, Eleonora and Park, Jihong and
               Choi, Jinho and Barbarossa, Sergio and Comminiello, Danilo",
  booktitle = {Proc. IEEE MLSP},
  pages     = "1--6",
  year      =  2024
}

@inproceedings{tang2024evolving,
  title={Evolving Semantic Communication with Generative Modelling},
  author={Tang, Shunpu and Yang, Qianqian and G{\"u}nd{\"u}z, Deniz and Zhang, Zhaoyang},
  booktitle={Proc. IEEE PIMRC},
  year={2024}
}

@article{SWINJSCC,
  author       = {Ke Yang and
                  Sixian Wang and
                  Jincheng Dai and
                  Xiaoqi Qin and
                  Kai Niu and
                  Ping Zhang},
  title        = {{SwinJSCC}: Taming Swin Transformer for Deep Joint Source-Channel Coding},
  journal      = {{IEEE} Trans. Cogn. Commun. Netw.},
  volume       = {11},
  number       = {1},
  pages        = {90--104},
  year         = {2025},
}

@article{AIGC,
  title={A comprehensive survey of {AI}-generated content ({AIGC}): A history of generative {AI} from {GAN} to {ChatGPT}},
  author={Cao, Yihan and Li, Siyu and Liu, Yixin and Yan, Zhiling and Dai, Yutong and Yu, Philip S and Sun, Lichao},
  journal={arXiv:2303.04226},
  year={2023}
}

@INPROCEEDINGS{Yilmaz-Other-2024,
  title     = "High perceptual quality wireless image delivery with denoising
               diffusion models",
  author    = "Yilmaz, Selim F and Niu, Xueyan and Bai, Bo and Han, Wei and
               Deng, Lei and Gündüz, Deniz",
  booktitle = {Proc. INFOCOM Workshp},
  volume    =  34,
  pages     = "1--5",
  year      =  2024
}

@ARTICLE{Chen-arXiveessIV-2025,
  title         = "{SING}: Semantic image communications using null-space and
                   {INN}-guided diffusion models",
  author        = "Chen, Jiakang and Yilmaz, Selim F and You, Di and Dragotti,
                   Pier Luigi and Gündüz, Deniz",
  journal       = {arXiv:2503.12484},
  year          =  2025,
}

@ARTICLE{Li_SemCom,
  author={Qiao, Li and Mashhadi, Mahdi Boloursaz and Gao, Zhen and Foh, Chuan Heng and Xiao, Pei and Bennis, Mehdi},
  journal={IEEE Wirel. Commun. Lett.}, 
  title={Latency-Aware Generative Semantic Communications With Pre-Trained Diffusion Models}, 
  year={2024},
  volume={13},
  number={10},
  pages={2652-2656}}

@INPROCEEDINGS{ho2020denoising,
  title={Denoising diffusion probabilistic models},
  author={Ho, Jonathan and Jain, Ajay and Abbeel, Pieter},
  booktitle={Proc. NeurIPS},
  volume={33},
  pages={6840--6851},
  year={2020}
}

@inproceedings{
song2021denoising,
title={Denoising Diffusion Implicit Models},
author={Jiaming Song and Chenlin Meng and Stefano Ermon},
booktitle={Proc. ICLR},
year={2021},

}

@ARTICLE{10614204,
  author={Liang, Chengsi and Du, Hongyang and Sun, Yao and Niyato, Dusit and Kang, Jiawen and Zhao, Dezong and Imran, Muhammad Ali},
  journal=IEEE_J_CCN, 
  title={Generative AI-Driven Semantic Communication Networks: Architecture, Technologies, and Applications}, 
  year={2025},
  volume={11},
  number={1},
  pages={27-47},
}

@article{tang2024retrieval,
  title={Retrieval-augmented generation for GenAI-enabled semantic communications},
  author={Tang, Shunpu and Zhang, Ruichen and Yan, Yuxuan and Yang, Qianqian and Niyato, Dusit and Wang, Xianbin and Mao, Shiwen},
  journal=IEEE_M_WC,
  year={2025}
}

\end{document}